\documentclass[11pt]{article}

\usepackage[utf8]{inputenc}
\usepackage{authblk}
\usepackage{setspace}
\usepackage[margin=1.25in]{geometry}
\usepackage{graphicx}
\graphicspath{ {./figures/} }
\usepackage{subcaption}
\usepackage{amsmath}
\usepackage{booktabs}
\usepackage{wrapfig}
\usepackage{amssymb}
\usepackage{mathtools}
\usepackage{amsthm}
\usepackage{url}
\usepackage[numbers]{natbib}
\usepackage{xcolor}
\usepackage{hyperref}
\usepackage{bbm}
\usepackage{nicefrac}
\usepackage{multirow}
\usepackage{multicol}
\usepackage{dsfont}
\usepackage{amsfonts}
\usepackage{adjustbox}

\usepackage{amssymb}

\providecommand{\keywords}[1]
{
  \small	
  \textbf{\textit{Keywords---}} #1
}

\newtheorem{theorem}{Theorem}

\newtheorem{corollary}{Corollary}

\theoremstyle{definition}

\newtheorem{assumption}{Assumption}
\theoremstyle{remark}

\newcommand{\norm}[1]{\left\lVert#1\right\rVert}
\newcommand{\abs}[1]{\left\vert#1\right\vert}

\newcommand{\R}{\mathbb R}

\renewcommand{\thefootnote}{\fnsymbol{footnote}}

\def\E{\mathbb{E}}

\newcommand{\Cc}{\mathcal{C}}
\newcommand{\Dc}{\mathcal{D}}
\newcommand{\Ec}{\mathcal{E}}
\newcommand{\Fc}{\mathcal{F}}

\def\1v{\mathbf 1}
\def\0v{\mathbf 0}

\def\1{\mathbbm{1}}
\renewcommand{\P}{\mathbb{P}}
\newcommand{\ind}{\mathds{1}}

\title{Quickest Change-point Detection in Generalized Linear Models with Variational Estimation}

\author[1]{Jonghyeok Lee}
\author[1]{Yao Xie$^*$}

\affil[1]{H. Milton Stewart School of Industrial and Systems Engineering, Georgia Institute of Technology}

\begin{document}

\maketitle

\begingroup
\renewcommand{\thefootnote}{\fnsymbol{footnote}}
\footnotetext[1]{Correspondence: yao.xie@isye.gatech.edu}
\endgroup

\begin{abstract}
We consider quickest change-point detection in generalized linear models when the post-change parameter is unknown. Such problems arise, for example, when monitoring whether the relationship between a learned representation and an observed response changes over time. Since detection must occur soon after a change, the post-change parameter must be estimated from a limited number of observations, making finite-sample estimation accuracy particularly important. We propose a window-limited CUSUM procedure that estimates the post-change parameter using a variational-inequality (VI) formulation. For noncanonical links, VI estimation can offer favorable finite-sample accuracy over short windows despite the asymptotic efficiency advantage of maximum likelihood estimation. We characterize how estimation error reduces the effective post-change information rate and affects detection delay, and derive finite-window guarantees accounting for both statistical error and incomplete numerical optimization. The resulting procedure retains false-alarm control and achieves first-order asymptotic optimality under suitable conditions. Numerical experiments compare detection delay and update cost for fixed-budget estimation and benchmark VI-WLCUSUM against alternative detectors.
\end{abstract}

\keywords{
quickest change detection, generalized linear models, variational inequalities, window-limited CUSUM.
}

\section{Introduction}
\label{sec:intro}

Monitoring the relationship between a feature vector and an observed response is a basic problem in statistical signal processing. Examples include detecting a change in the calibration of a learned representation or in the response of a monitored system. A generalized linear model (GLM)\citep{mccullagh1989} provides a flexible conditional model for such relationships, but the parameter after a change is typically unknown.

Such changes arise in several modern applications. For example, learned language-model representations are mapped to next-token prediction probabilities through a softmax output layer \citep{vaswani2017}, and the calibration of these predictive distributions is an important measure of reliability \citep{jiang2021}. A change in the relationship between the learned representation and the observed response may therefore indicate a change in the underlying operating regime. Another example is sequential pairwise-choice data, where the Bradley--Terry model \citep{bradley1952} provides a logistic model for comparison outcomes, and changes in its parameters correspond to shifts in preferences or rankings. Change-point detection for sequential pairwise comparisons has been studied, for example, in~\cite{li2022}.

When the post-change distribution is unknown, quickest change detection must be coupled with parameter estimation. In this setting, parameter estimation is based on a limited number of post-change observations, since waiting for a large sample increases detection delay. Thus, the estimation problem arising in quickest detection is inherently finite-sample, and asymptotic efficiency alone may not characterize the estimator most suitable for detection.

Page's CUSUM \citep{page1954} is minimax optimal under Lorden's criterion when both the pre- and post-change distributions are known \citep{moustakides1986}. When the post-change parameter is unknown, window-limited CUSUM (WLCUSUM) \citep{xie2023} estimates it from a recent window and substitutes the estimate into the likelihood-ratio statistic. The quality of this short-window estimate therefore directly affects the post-change information available for detecting the change. This places particular importance on finite-sample estimation accuracy in addition to asymptotic efficiency.

In this work, we use variational-inequality (VI) estimation for the post-change GLM parameter \citep{zhu2026,juditsky2019}. VI estimation solves a moment equation whose geometry can be simpler than that of likelihood optimization. For canonical links, the VI and likelihood estimating equations coincide up to a known dispersion factor. For noncanonical links, they differ: maximum likelihood has the usual asymptotic efficiency advantage, whereas VI estimation can achieve better accuracy over the short windows relevant to quickest detection. This finite-sample behavior motivates incorporating VI estimation into WLCUSUM. Related VI estimators have also been studied for interacting marked processes \citep{juditsky2020}.

We make the connection between finite-window estimation and sequential detection explicit. In particular, we show that estimation error reduces the effective post-change information rate through the expected conditional KL discrepancy between the true and estimated post-change models. We derive finite-window VI risk bounds that account for both statistical estimation error and incomplete numerical optimization, establish false-alarm and detection-delay guarantees, and give conditions for first-order asymptotic optimality. Numerical experiments examine the resulting tradeoff among finite-sample estimation accuracy, computation, and detection delay.

\section{Setup}
\label{sec:model}

We observe a sequence of independent observations $Z_t = (X_t,Y_t)$, $t = 1, 2, \ldots$, with covariates $X_t \in \R^d$ following a common distribution $Q$. The response follows a standard Generalized Linear Model (GLM)
\[
    \E_{\theta}[Y \mid X] = h(X^\top \theta),
\]
where $h$ is a known continuous and nondecreasing (inverse) link function. We also allow clipped-link extensions.

We are interested in detecting the change in the parameter values, at an unknown time $\kappa \ge 0$ as soon as possible after it has occurred. This can be cast as the following sequential hypothesis test:
\begin{align*}
    H_0: \quad
    & Y_t \mid X_t \sim p_{\theta_\infty}(\cdot \mid X_t), \quad t = 1,2,\ldots,\\
    H_1: \quad
    & Y_t \mid X_t \sim
    \begin{cases}
        p_{\theta_\infty}(\cdot \mid X_t), \quad t = 1, 2, \ldots, \kappa,\\
        p_{\theta_0}(\cdot \mid X_t), \quad t = \kappa+1, \kappa+2, \ldots.
    \end{cases}
\end{align*}
The conditional density family and baseline parameter $\theta_\infty$ are known, while $\theta_0 \in \Theta \subset \R^d$ is unknown. We assume that the known parameter family $\Theta$ is compact and convex for the risk bounds, and $\Theta$ need not contain $\theta_\infty$.

Let $\Fc_t = \sigma(Z_1, \ldots, Z_t)$, with $\Fc_0$ trivial, and let $\P_\nu, \E_\nu$ denote the law and expectation for a change after $\nu$. The subscript $\infty$ denotes no change, whereas $0$ denotes an entirely post-change stream.
For a stopping time $T$, we seek small Lorden-type worst case delay \citep{lorden1971} subject to a large average run length (ARL) $\E_\infty T$:
\[
    \Dc(T) = \sup_{\nu \geq 0} \operatorname*{ess\,sup}
    \E_\nu[(T - \nu)^+ \mid \Fc_\nu],
\]
where $x^+ = \max\{x, 0\}$.

\section{Adaptive-CUSUM based on VI estimation}

Choose a window length $w \ge 1$. At time $t > w$, estimate the post-change coefficient from the preceding $w$ pairs through
\begin{equation}\label{eq:operator}
    F_{t,w}(\theta) = \frac{1}{w} \sum_{i=t-w}^{t-1} X_i \{h(X_i^\top \theta) - Y_i\}.
\end{equation}
On a fully post-change window, this mean-residual operator is centered at $\theta_0$. We define $\widehat\theta_{t-1} \in \Theta$ as an exact or approximate solution of the variational inequality (VI)
\begin{equation}\label{eq:VI}
    \langle F_{t,w}(\widehat\theta_{t-1}), \theta - \widehat\theta_{t-1} \rangle
    \ge -\epsilon_{t,w} \|\theta - \widehat\theta_{t-1}\|,
    \quad \theta \in \Theta,
\end{equation}
where $\epsilon_{t,w} \ge 0$ measures numerical error; $\epsilon_{t,w} = 0$ gives an exact solution.
This formulation also permits boundary solutions for which $F_{t,w}(\widehat\theta_{t-1})$ need not vanish.

The nondecreasing inverse link makes $F_{t,w}$ the gradient of the convex potential function
\begin{equation}\label{eq:potential}
    \Phi_{t,w}(\theta) = \frac{1}{w} \sum_{i=t-w}^{t-1} \{H(X_i^\top \theta) - Y_i X_i^\top \theta\}, \quad H'=h.
\end{equation}
Thus an exact VI solution minimizes $\Phi_{t,w}$ over $\Theta$. For example, with Euclidean projection $\Pi_\Theta$, positive step sizes $\eta_k$, and iteration budget $K_{\rm it}$,
\[
    \theta^{(k+1)} = \Pi_\Theta \{\theta^{(k)} - \eta_k F_{t,w}(\theta^{(k)})\},
    \quad \widehat\theta_{t-1} = \theta^{(K_{\rm it})}.
\]
The iterations run from $k=0$ to $K_{\rm it} - 1$, starting in $\Theta$.

The resulting VI-WLCUSUM uses threshold $b > 0$, starts at $S_w = 0$,
and, for $t > w$, computes
\begin{align}\label{eq:detector}
    \ell_t = \log\frac{p_{\widehat\theta_{t-1}}(Y_t \mid X_t)}{p_{\theta_\infty}(Y_t \mid X_t)}, \quad
    S_t = \max\{S_{t-1},0\} + \ell_t,
\end{align}
and raises an alarm at the stopping time
\begin{equation}\label{eq:st-time}
    T_b = \inf\{t > w: S_t \ge b\}.
\end{equation}
The estimate is $\Fc_{t-1}$-measurable as the current pair is scored before entering a later estimation window. This separation is essential for false-alarm control.
The recursion discards a negative accumulated score before adding the new increment and stops at the first threshold crossing. It is the WLCUSUM recursion of~\cite{xie2023}, with VI supplying the estimate.

\section{Theoretical Guarantees}
\label{sec:core-theory}

We first study the false-alarm control coming from normalized predictive densities. For delay, we first bound the VI estimation error, translate it into lost post-change information, and then account for overlapping windows and threshold overshoot. Throughout, $w$ and $b$ are fixed unless an asymptotic limit is specified.
Please see Appendix~\ref{app:proof} for the proofs.

\begin{assumption}[Model and likelihood]\label{ass:core-model}
All fitted densities are measurable, and have a common support with the baseline for $Q$-almost every $X$. Define
\begin{equation}\label{eq:core-information}
\begin{aligned}
    K(\theta_0, \theta) &= \E_Q D_{\mathrm{KL}} \{p_{\theta_0}(\cdot \mid X)\|p_\theta(\cdot \mid X)\}, \quad
    I = K(\theta_0, \theta_\infty) \in (0,\infty).
\end{aligned}
\end{equation}
\end{assumption}
\noindent Here $D_{\mathrm{KL}}$ is Kullback--Leibler divergence.
Thus $I$ is the mean log-likelihood increment available to an oracle knowing $\theta_0$, while $K(\theta_0,\theta)$ measures the predictive cost of using a fitted coefficient $\theta$.

\begin{assumption}[Fixed-window estimation]\label{ass:core-window}
For each $w$, one deterministic measurable map $g_w$ is used at every time:
\[
    \widehat\theta_{t-1} = g_w(Z_{t-w}, \ldots, Z_{t-1}), \quad t>w.
\]
\end{assumption}
\noindent Assumption~\ref{ass:core-window} ensures that the increment term depends on only $w+1$ consecutive observations.
On an i.i.d.~post-change window, write $\widehat\theta_w$ for its fit and $F_w(\theta) = w^{-1} \sum_{i=1}^w X_i \{h(X_i^\top\theta) - Y_i\}$.

\begin{assumption}[VI accuracy and information loss]\label{ass:core-vi}
The following hold for this post-change window.
\begin{enumerate}
\item[(i)] The feasible set $\Theta$
has diameter $D$. Moreover,
\[
    \Gamma = \E_0 [(Y - h(X^\top \theta_0))^2 X X^\top],
    \quad \operatorname{tr} \Gamma < \infty.
\]
\item[(ii)] On an event $G_w$ with $\P_0(G_w^c) \le \pi_w \in [0,1]$, for deterministic $m > 0$,
\[
    \langle F_w(\theta) - F_w(\theta_0), \theta - \theta_0 \rangle
    \ge m \|\theta - \theta_0\|^2, \quad \theta \in \Theta.
\]
\item[(iii)] The fit $\widehat\theta_w \in \Theta$ satisfies
\[
    \langle F_w(\widehat\theta_w), \theta - \widehat\theta_w \rangle
    \ge -\epsilon_w \|\theta - \widehat\theta_w\|, \quad \theta \in \Theta,
\]
where $\epsilon_w \ge 0$ may depend on the data and $e_w = \sqrt{\E_0 \epsilon_w^2} < \infty$.
\item[(iv)] For $0 \le C_K < \infty$, $K(\theta_0, \theta) \le C_K \|\theta - \theta_0\|^2$ on $\Theta$.
\end{enumerate}
\end{assumption}
\noindent Assumption~\ref{ass:core-vi} requires informative windows and controlled numerical error. For differentiable $h$, uniform positive lower bounds on $h'$ over feasible predictors and on the sample-Gram minimum eigenvalue yield the curvature condition (ii) \citep[Eq.~(18)]{juditsky2023}; (iv) separately
converts parameter error into predictive KL loss.

\begin{assumption}[Predictive moment]
\label{ass:core-moment}
In the stationary post-change regime, the predictive increment in~\eqref{eq:detector} satisfies
$\E_0 \ell_t^2 \le M_{2,w} < \infty$.
\end{assumption}
\noindent This moment condition controls threshold overshoot. Now define
\begin{equation}\label{eq:core-error}
\begin{aligned}
    \Ec_w &= \frac{(\sqrt{\operatorname{tr} \Gamma/w} + e_w)^2}{m^2} + D^2 \pi_w, \quad
    \underline I_w = I - C_K \Ec_w.
\end{aligned}
\end{equation}
The quantity $\Ec_w$, used below to bound the VI estimation error, accounts for sampling variability through $\operatorname{tr}\Gamma/w$, numerical inaccuracy through $e_w$, and poorly informative windows through $\pi_w$.
\begin{theorem}[False-alarm guarantee]
\label{thm:core-arl}
Under Assumption~\ref{ass:core-model}, any predictable estimates yielding valid densities satisfy
\begin{equation}\label{eq:core-arl}
    \E_\infty T_b \ge w + e^b.
\end{equation}
\end{theorem}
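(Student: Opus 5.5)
The plan is the standard martingale argument for window-limited CUSUM with predictable plug-in estimates. Under $\P_\infty$ all observations follow the baseline model. For each $t>w$, the ratio $\Lambda_t = e^{\ell_t} = p_{\widehat\theta_{t-1}}(Y_t\mid X_t)/p_{\theta_\infty}(Y_t\mid X_t)$ has $\widehat\theta_{t-1}$ that is $\Fc_{t-1}$-measurable. Since $X_t$ is independent of $\Fc_{t-1}$, and by the common-support part of Assumption~\ref{ass:core-model}, we get
\[
\E_\infty[\Lambda_t \mid \Fc_{t-1}, X_t] = \int p_{\widehat\theta_{t-1}}(y\mid X_t)\,dy = 1 .
\]
Hence $\E_\infty[\Lambda_t\mid \Fc_{t-1}]=1$.

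Next I will use the Shiryaev--Roberts comparison. Define $R_w=0$ and $R_t = (1+R_{t-1})\Lambda_t$ for $t>w$. Then $R_t - (t-w)$ is a $\P_\infty$-martingale with mean zero: $\E_\infty[R_t\mid\Fc_{t-1}] = 1+R_{t-1}$. Unrolling the recursion gives $R_t=\sum_{k=w+1}^{t}\prod_{j=k}^{t}\Lambda_j$.

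The CUSUM recursion $S_t=\max\{S_{t-1},0\}+\ell_t$ with $S_w=0$ unrolls to
\[
S_t=\max_{w<k\le t}\sum_{j=k}^t \ell_j .
\]
Therefore
\[
e^{S_t}=\max_{k}\prod_{j=k}^t\Lambda_j\le R_t ,
\]
so $S_t\ge b$ implies $R_t\ge e^b$, and $T_b\ge T^{SR}_b:=\inf\{t>w: R_t\ge e^b\}$.

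It then suffices to show $\E_\infty T^{SR}_b\ge w+e^b$. If $\E_\infty T^{SR}_b=\infty$ there is nothing to prove, so assume it is finite. Apply optional stopping to $R_{t\wedge T}-(t\wedge T-w)$. Since $R\ge0$, we have
\[
\E_\infty[R_{T\wedge n}] = \E_\infty[T\wedge n]-w \le \E_\infty T - w .
\]
Fatou's lemma, with $R_{T\wedge n}\to R_T$ a.s. because $T<\infty$ a.s. when its mean is finite, then gives $\E_\infty R_T\le \E_\infty T-w$. Finally $R_T\ge e^b$ on $\{T<\infty\}$, which yields $\E_\infty T\ge w+e^b$. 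This avoids needing uniform integrability.

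The main obstacle is the conditional unit-mean step for $\Lambda_t$. It needs the fitted density to be a valid normalized density, which is the hypothesis of the theorem. It also needs the estimate to use only past data, so that conditioning on $\Fc_{t-1}$ and $X_t$ leaves $Y_t$ distributed as $p_{\theta_\infty}(\cdot\mid X_t)$. Measurability of $(\theta,x,y)\mapsto p_\theta(y\mid x)$ is required so the integral is a measurable function. Common support ensures the ratio integrates to exactly one rather than at most one; either way only $\le 1$ is needed, making $R_t-(t-w)$ a supermartingale, which suffices.
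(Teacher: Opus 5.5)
Your proposal is correct and follows essentially the same argument as the paper: the conditional unit-mean property of $\Lambda_t$, the Shiryaev--Roberts process $R_t=(1+R_{t-1})\Lambda_t$ with $R_t-(t-w)$ a martingale, the comparison $e^{S_t}\le R_t$, and optional stopping at truncated times. The differences are cosmetic. The paper proves $R_t\ge e^{S_t}$ by induction and bounds $\E_\infty R_{T_b\wedge N}\ge e^b\P_\infty(T_b\le N)$ directly before letting $N\to\infty$, whereas you unroll both recursions, pass through $T_b^{\mathrm{SR}}\le T_b$, and use Fatou. Your remark that $\E_\infty[\Lambda_t\mid\Fc_{t-1}]\le 1$ already suffices (a supermartingale) correctly shows that the common-support part of Assumption~\ref{ass:core-model} is not needed for this bound.
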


Theorem~\ref{thm:core-arl} separates false-alarm control from estimation accuracy. The additive $w$ reflects our convention that $T_b$ includes the initial estimation window.

\begin{theorem}[VI error and effective information]
\label{thm:core-risk}
Under Assumptions~\ref{ass:core-model}--\ref{ass:core-vi},
\begin{equation}\label{eq:core-risk-drift}
\begin{aligned}
    \E_0 \|\widehat\theta_w - \theta_0\|^2 &\le \Ec_w, \quad
    I_w \coloneqq \E_0 \ell_t
    = I - \E_0 K(\theta_0, \widehat\theta_w)
    \ge \underline I_w,
\end{aligned}
\end{equation}
where the increment is evaluated in the stationary post-change regime.
\end{theorem}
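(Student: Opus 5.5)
The plan has two parts: a mean-square error bound for the VI fit, split according to the event $G_w$, and then an identity for the mean increment obtained by conditioning on the estimation window.

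\emph{Estimation error.} Split $\E_0\|\widehat\theta_w-\theta_0\|^2$ over $G_w$ and $G_w^c$. On $G_w^c$, use the diameter bound $\|\widehat\theta_w-\theta_0\|\le D$, since both points lie in $\Theta$. This gives the contribution $D^2\pi_w$.

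On $G_w$, take $\theta=\theta_0$ in the approximate VI condition (iii):
\[
\langle F_w(\widehat\theta_w),\widehat\theta_w-\theta_0\rangle\le \epsilon_w\|\widehat\theta_w-\theta_0\|.
\]
Combine this with the curvature condition (ii) at $\theta=\widehat\theta_w$:
\[
m\|\widehat\theta_w-\theta_0\|^2\le \langle F_w(\widehat\theta_w)-F_w(\theta_0),\widehat\theta_w-\theta_0\rangle \le (\epsilon_w+\|F_w(\theta_0)\|)\|\widehat\theta_w-\theta_0\|,
\]
using Cauchy--Schwarz on the $F_w(\theta_0)$ term. Dividing through gives $\|\widehat\theta_w-\theta_0\|\,\1_{G_w}\le(\epsilon_w+\|F_w(\theta_0)\|)/m$.

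Squaring, taking expectations, and applying Minkowski's inequality in $L^2$ gives
\[
\E_0[\|\widehat\theta_w-\theta_0\|^2\1_{G_w}]\le m^{-2}\bigl(\sqrt{\E_0\|F_w(\theta_0)\|^2}+e_w\bigr)^2 .
\]
It remains to compute $\E_0\|F_w(\theta_0)\|^2$. The vector $F_w(\theta_0)$ is an average of $w$ i.i.d.\ terms $X_i(h(X_i^\top\theta_0)-Y_i)$. Each term has conditional mean zero given $X_i$, by the GLM mean specification under $\theta_0$. The cross terms therefore vanish, and $\E_0\|F_w(\theta_0)\|^2=\operatorname{tr}\Gamma/w$. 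Adding the two pieces yields $\Ec_w$.

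\emph{Information identity.} In the stationary post-change regime, $\widehat\theta_{t-1}$ is a function of $Z_{t-w},\dots,Z_{t-1}$ only, by Assumption~\ref{ass:core-window}. It is therefore independent of $Z_t=(X_t,Y_t)$, and it has the law of $\widehat\theta_w$. Condition on the window and on $X_t$, and write $\ell_t$ as
\[
\log\frac{p_{\theta_0}}{p_{\theta_\infty}}-\log\frac{p_{\theta_0}}{p_{\widehat\theta}} .
\]
Integrate over $Y_t\sim p_{\theta_0}(\cdot\mid X_t)$, then over $X_t\sim Q$. The first term gives $I$ and the second gives $K(\theta_0,\widehat\theta)$. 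This yields $I_w=I-\E_0K(\theta_0,\widehat\theta_w)$, and (iv) together with the first part then gives $I_w\ge I-C_K\Ec_w=\underline I_w$.

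\emph{Main obstacle.} The delicate step is justifying the splitting of the log-ratio expectation. It requires the common-support condition in Assumption~\ref{ass:core-model} and integrability of each term. $I<\infty$ is given, and the KL term is nonnegative, possibly $+\infty$ a priori. By (iv), $K$ is bounded by $C_KD^2$ on $\Theta$, so it is finite and the decomposition is legitimate via Fubini/Tonelli.

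A secondary subtlety is that $\epsilon_w$ is data-dependent, so the bound must not treat it as fixed. The $L^2$ Minkowski step handles this, since it needs only $e_w=\sqrt{\E_0\epsilon_w^2}<\infty$ and no independence between $\epsilon_w$ and $F_w(\theta_0)$.
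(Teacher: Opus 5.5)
Your proposal is correct and follows essentially the same route as the paper: the $G_w/G_w^c$ split, testing the approximate VI at $\theta_0$, curvature plus Cauchy--Schwarz, Minkowski in $L^2$ with $\E_0\|F_w(\theta_0)\|^2=\operatorname{tr}\Gamma/w$, and then the log-ratio decomposition after conditioning on the independent window. The only difference is that the paper makes your integrability step explicit through the bound $\E_{P_1}|\log(p_1/p_2)|\le D_{\mathrm{KL}}(P_1\|P_2)+2/e$, which gives the uniform bound $\sup_{\theta\in\Theta}\E_0|\ell(\theta;Z)|\le I+C_KD^2+4/e$; your appeal to Fubini/Tonelli implicitly relies on this same fact, namely that finite KL implies absolute integrability of the log-ratio because its negative part always has mean at most $1/e$.
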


The identity in~\eqref{eq:core-risk-drift} shows that the information lost through estimation is exactly the expected predictive KL divergence. Under Assumption~\ref{ass:core-vi}(iv), parameter mean squared error controls this loss of mean post-change information.

\begin{theorem}[Worst-case detection delay]
\label{thm:core-delay}
Under Assumptions~\ref{ass:core-model}--\ref{ass:core-moment}, if $\underline I_w > 0$, then
\begin{equation}\label{eq:core-delay}
\begin{aligned}
    \Dc(T_b) \le
    \frac{b + wI}{\underline I_w}
    + \frac{M_{2,w}}{\underline I_w^2} + \sqrt{\frac{M_{2,w}(b + wI)}{\underline I_w^3}}.
\end{aligned}
\end{equation}
\end{theorem}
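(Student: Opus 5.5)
We bound the worst-case delay by dominating the post-change behavior of $S_t$ with a simple random walk argument. Fix a change time $\nu\ge 0$ and condition on $\Fc_\nu$. Since $S_t \ge S_{t-1}+\ell_t$ whenever $S_{t-1}\ge 0$, and $\max\{S_{t-1},0\}\ge 0$ in general, the recursion gives, for any $t>s\ge w$,
\[
S_t \ge \sum_{i=s+1}^{t}\ell_i .
\]
This is the usual CUSUM comparison: restarting the statistic at zero can only lower it. We take $s=\max\{\nu,w\}+w$, so that every estimate $\widehat\theta_{i-1}$ with $i>s$ uses a window lying entirely after the change. Then $T_b-\nu \le (s-\nu) + \tau$, where
\[
\tau=\inf\Bigl\{n\ge1:\textstyle\sum_{i=s+1}^{s+n}\ell_i\ge b\Bigr\}.
\]
The increments $\ell_i$ with $i>s$ depend only on post-change data, so their conditional law given $\Fc_\nu$ is the stationary post-change law, uniformly in $\nu$ and the history. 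The term $s-\nu$ contributes up to $2w$ in the worst case, which is handled as follows. The $wI$ in the numerator suggests bounding the cost of the contaminated windows not by raw time but by a threshold inflation. Concretely, we compare against $\tau'$, the first passage of $\sum_{i>\nu}\ell_i$ above $b+wI$, and absorb the first $w$ post-change increments. An alternative, which I would use if the inflation route fails, is to bound $s-\nu\le w$ by taking $\nu\ge w$ and noting that $w\le wI/\underline I_w$ when $I\ge\underline I_w$. I will aim for whichever bookkeeping yields exactly $(b+wI)/\underline I_w$.

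\textbf{Stationary sums and Wald's identity.} The increments $\ell_i$ with $i>s$ form a stationary $w$-dependent sequence by Assumption~\ref{ass:core-window}. By Theorem~\ref{thm:core-risk} their mean is $I_w\ge\underline I_w$, and by Assumption~\ref{ass:core-moment} their second moment is at most $M_{2,w}$. Because of the dependence, Wald's identity does not apply directly. Instead we write
\[
\ell_i = \bigl(\ell_i-\E_0[\ell_i\mid\Fc_{i-1}]\bigr)+\E_0[\ell_i\mid\Fc_{i-1}] .
\]
The conditional mean equals $I-K(\theta_0,\widehat\theta_{i-1})$, since $\widehat\theta_{i-1}$ is predictable and $(X_i,Y_i)$ is independent of the past. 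The first term is a martingale difference with conditional variance at most $\E_0[\ell_i^2\mid\Fc_{i-1}]$. Optional stopping at $\tau\wedge n$ then gives
\[
b+\text{overshoot}=\E\sum\ell_i=\E\sum_{i\le\tau}\bigl(I-K(\theta_0,\widehat\theta_{i-1})\bigr).
\]
The quantity $K(\theta_0,\widehat\theta_{i-1})$ is a stationary sequence with mean at most $C_K\Ec_w$, but it is random. Summing it up to the random time $\tau$ is the delicate part. A safe route is to bound $\E\sum_{i\le\tau}K_i$ by its unconditional expectation over a deterministic horizon, or to use that $K_i\ge0$, in which case the drift lower bound should be obtained in expectation through a renewal argument.

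\textbf{Overshoot, and the main obstacle.} The overshoot is $R=\sum_{i\le\tau}\ell_i-b\le \ell_\tau^+$. The crude bound
\[
\E R\le \sqrt{\E\sum_{i\le\tau}\ell_i^2}\le\sqrt{M_{2,w}\E\tau}
\]
follows from the martingale second-moment (Wald's second) identity. Inserting it gives the quadratic inequality
\[
\underline I_w\,\E\tau\le b'+\sqrt{M_{2,w}\E\tau},\qquad b'=b+wI .
\]
Solving yields
\[
\E\tau\le \frac{b'}{\underline I_w}+\frac{M_{2,w}}{\underline I_w^2}+\sqrt{\frac{M_{2,w}b'}{\underline I_w^3}},
\]
which matches the three terms of \eqref{eq:core-delay} exactly, up to the inequality $\sqrt{x+y}\le\ldots$ in the root-solving. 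This confirms the structure of the proof. Finiteness of $\E\tau$, needed for optional stopping, comes from truncating at $\tau\wedge n$ and letting $n\to\infty$. I expect the main obstacle to be justifying the drift step, namely replacing the random conditional drifts by $\underline I_w$ inside the stopped sum under overlapping-window dependence. I would first try to show
\[
\E\sum_{i\le\tau\wedge n}K_i\le C_K\Ec_w\,\E(\tau\wedge n)+(\text{a }w\text{-dependent correction}).
\]
The natural source of such a correction, absorbed into $wI$, is that $\{\tau\ge i\}$ depends on data through time $i-1$, while $K_i$ depends on the window $Z_{i-w},\dots,Z_{i-1}$. Shifting the indicator to $\{\tau\ge i-w\}$ decouples the two, at a cost of at most $w$ extra terms, each bounded by the drift $I$.
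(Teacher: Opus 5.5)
Your architecture matches the paper's. You reduce to a stationary post-change first-passage problem and decouple the $w$-dependence by shifting the continuation indicator back by $w$, which is the paper's delayed-Wald step. You charge leftover terms at most $I$ each via $\E_0[\ell_t\mid\Fc_{t-1}]=I-K(\theta_0,\widehat\theta_{t-1})\le I$, bound the overshoot by Cauchy--Schwarz, and solve a quadratic. Your direct comparison $T_b-\nu\le(s-\nu)+\tau$ is a valid, self-contained substitute for the paper's appeal to the restart lemma of \cite{xie2023}, provided you take $s=\nu+w$. That choice already makes every window post-change for every $\nu\ge0$, so $s-\nu=w$ and the $2w$ case never arises.

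Two steps fail as written. First, $\E\sum_{i\le\tau}\ell_i^2\le M_{2,w}\E\tau$ does not follow from Wald's second identity. The $\ell_i$ are not martingale differences, and Assumption~\ref{ass:core-moment} bounds only the unconditional second moment. Meanwhile $\{\tau\ge i\}$ is correlated with the window $Z_{i-w},\dots,Z_{i-1}$ that determines $\widehat\theta_{i-1}$: poor fits depress the drift and so prolong the run. You need the same shift you propose for the drift. Use $\sum_{i\le\tau}\ell_i^2\le\sum_{i\le\tau+w}\ell_i^2$ together with the independence of $\ell_i$ from $\{\tau\ge i-w\}$. This gives $M_{2,w}(\E\tau+w)$, not $M_{2,w}\E\tau$.

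Second, the bookkeeping does not close. Your quadratic bounds $\E\tau$ by the target, but the delay is $w+\E\tau$, so you are off by an additive $w$. The patch $w\le wI/\underline I_w$ spends $wI$ a second time, since the drift correction already used it. The threshold-inflation idea is unsupported: $\ell_{\nu+1},\dots,\ell_{\nu+w}$ use fits built partly on arbitrary pre-change history. Their conditional drift $I-K(\theta_0,\widehat\theta_{t-1})$ is only bounded below by $I-C_KD^2$, which need not exceed $-I$.

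The fix is exactly the paper's device: count the burn-in inside the horizon. Set $B=w+\E\tau$, which is the paper's $a_N=\E_0\sigma_N$ after truncating at $N$.
\begin{itemize}
\item Independence gives $\E\sum_{i\le\tau+w}\ell_i=I_wB$.
\item The tail $\sum_{\tau<i\le\tau+w}\ell_i$ has mean at most $wI$.
\item The squares contribute at most $M_{2,w}B$.
\end{itemize}
Hence $I_wB\le b+wI+\sqrt{M_{2,w}B}$. Since $\Dc(T_b)\le B$, solving gives exactly the stated bound. Equivalently, you may keep your $K_i$ decomposition, but use the exact correction $w\,\E_0K(\theta_0,\widehat\theta_w)=w(I-I_w)$ rather than $wI$.
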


For fixed $w$, the leading term in~\eqref{eq:core-delay} as $b \to \infty$ is $b/\underline I_w$. The remaining terms account for the estimation window and threshold overshoot. A larger window may improve estimation and increase $\underline I_w$, but also enlarges the explicit $wI$ term in the numerator. The bound therefore captures the competing effects of window length.

\begin{corollary}[First-order minimax optimality]\label{cor:core-first-order}
Fix $d$ and $\theta_0$. Suppose Assumptions~\ref{ass:core-model}--\ref{ass:core-moment} hold along $w = w_\gamma$, with $m > 0$, $D$, $C_K$, and $\Gamma$ independent of $w$, and
\begin{gather}\label{eq:core-asymptotic-conditions}
    e_w \to 0, \quad \pi_w \to 0, \quad \sup_w M_{2,w} < \infty, \nonumber\\
    w \to \infty, \quad w = o(\log\gamma).
\end{gather}
For $b = \log\gamma$,
\begin{equation}\label{eq:core-first-order}
    \E_\infty T_b \ge \gamma, \qquad
    \Dc(T_b) = \frac{\log\gamma}{I} \{1+o(1)\},
\end{equation}
and $T_b$ is first-order minimax optimal under Lorden's criterion.
\end{corollary}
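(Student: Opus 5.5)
We prove Corollary~\ref{cor:core-first-order} by combining Theorem~\ref{thm:core-arl}, Theorem~\ref{thm:core-delay}, and the classical Lorden lower bound.

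\emph{False alarms.} Theorem~\ref{thm:core-arl} gives $\E_\infty T_b \ge w + e^{b} \ge \gamma$ when $b=\log\gamma$. This requires no condition on $w$.

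\emph{Effective information.} We next show $\underline I_w \to I$. Since $m$, $D$ and $\operatorname{tr}\Gamma$ do not depend on $w$, the first part of $\Ec_w$ satisfies
\[
(\sqrt{\operatorname{tr}\Gamma/w}+e_w)^2/m^2 \to 0,
\]
because $w\to\infty$ and $e_w\to0$. The second part satisfies $D^2\pi_w\to0$. Hence $\Ec_w\to0$, and since $C_K$ is fixed, $\underline I_w = I - C_K\Ec_w \to I>0$. In particular $\underline I_w>0$ for all large $\gamma$, so Theorem~\ref{thm:core-delay} applies.

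\emph{Upper bound on delay.} Write $M=\sup_w M_{2,w}<\infty$ and divide the right side of \eqref{eq:core-delay} by $\log\gamma/I$.
\begin{itemize}
\item[(a)] The leading term is $(\log\gamma + wI)/\underline I_w$. After division it equals $(1 + wI/\log\gamma)\, I/\underline I_w$, which tends to $1$ because $w=o(\log\gamma)$ and $\underline I_w\to I$.
\item[(b)] The overshoot term $M/\underline I_w^2$ is bounded, so it is $o(\log\gamma)$.
\item[(c)] The cross term $\sqrt{M(\log\gamma+wI)/\underline I_w^3}$ is $O(\sqrt{\log\gamma})=o(\log\gamma)$.
\end{itemize}
Together these give $\Dc(T_b)\le (\log\gamma/I)\{1+o(1)\}$.

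\emph{Matching lower bound.} We invoke Lai's (1998) extension of Lorden's (1971) asymptotic lower bound. For any stopping time $T$ with $\E_\infty T\ge\gamma$, it gives $\Dc(T)\ge (\log\gamma/I)\{1+o(1)\}$ in the known-$\theta_0$ problem. Here $I$ is the KL information per observation of the pair $(X,Y)$: the covariate law $Q$ is the same before and after the change, so the per-pair KL equals $K(\theta_0,\theta_\infty)=I$. Lai's condition requires that the log-likelihood ratio sums satisfy
\[
\max_{t\le n}\sum_{i=\nu+1}^{\nu+t}\log\frac{p_{\theta_0}}{p_{\theta_\infty}}\Big/n \to I
\]
in the appropriate almost-sure sense. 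This follows from the strong law of large numbers for i.i.d.\ increments with mean $I<\infty$. Knowing $\theta_0$ can only help, so the bound also holds for detectors that do not know $\theta_0$. Combining with the upper bound yields \eqref{eq:core-first-order} and first-order minimax optimality.

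\emph{Main obstacle.} The delicate step is the lower bound. One must check the precise form of Lai's condition in the i.i.d.-pair setting, where covariates are random but share the law $Q$ under both hypotheses. A finite first moment suffices for the stated Lai condition, and the corollary's assumptions provide it since $I<\infty$. If needed, one could instead use Lorden's original argument directly, since the observations are i.i.d.\ in both regimes.
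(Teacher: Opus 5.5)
Your proposal is correct and follows essentially the same route as the paper: Theorem~\ref{thm:core-arl} for the ARL, then $\Ec_w \to 0$ so that $\underline I_w \to I$ and $\underline I_w > 0$ eventually, then term-by-term normalization of the bound~\eqref{eq:core-delay} by $(\log\gamma)/I$, and finally the classical minimax lower bound for the i.i.d.\ pair law, whose KL information is $I$ because $Q$ is common to both regimes. The only difference is cosmetic: you cite Lai's (1998) extension, while the paper applies Lorden's (1971) i.i.d.\ bound directly, which, as you note, already suffices here.
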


Corollary~\ref{cor:core-first-order} shows that VI-WLCUSUM attains the classical minimax benchmark $(\log\gamma)/I$ under the false-alarm constraint $\E_\infty T \ge \gamma$. This is also the first-order worst-case delay achieved by an oracle CUSUM that knows $\theta_0$. Thus, estimating the unknown post-change parameter and using a finite window incur no first-order asymptotic delay penalty.

The following examples attain the first-order minimax guarantee~\eqref{eq:core-first-order} for fixed dimension, parameters, and model constants, with $I > 0$, $b = \log\gamma$, $w \to \infty$, and $w = o(\log\gamma)$.

\subsection{Gaussian example}\label{sec:gaussian-example}

Let $X \sim N(0, I_d)$ and
\begin{equation}\label{eq:gaussian}
\begin{aligned}
    Y &= h(X^\top \theta) + \varepsilon, \quad \varepsilon \sim N(0, \sigma^2),\\
    h(z) &= az + (1-a) \tanh z, \quad 0 < a < 1,
\end{aligned}
\end{equation}
where $\varepsilon$ is independent of $X$ and $\sigma > 0$. Let $\Theta$ be a radius-$R$ ball containing $\theta_\infty$ and $\theta_0$. For $w \ge 16d$, use projected-gradient updates of $\Phi_{t,w}$ from the same fixed point in $\Theta$ at every window, with
\[
    \eta = \frac{8}{9+a}, \quad
    k_w = \left\lceil \frac{9}{4a} \log\frac{w}{d} \right\rceil, \quad
    \rho = \frac{9-a}{9+a}.
\]
Gaussian concentration~\cite[Cor.~5.35]{vershynin2012} and this iteration budget give
\begin{equation}\label{eq:explicit-risk}
\begin{aligned}
    \E_0 \|\widehat\theta_{w,k_w} - \theta_0\|^2 \le \frac{32 \sigma^2 d}{a^2 w} +8 R^2 \rho^{2k_w} + 8 R^2 e^{-w/32} = O(dw^{-1}).
\end{aligned}
\end{equation}
Here $C_K = 1/(2\sigma^2)$ and the predictive moment satisfies
\[
    \sup_w M_{2,w} \le \frac{4 R^2}{\sigma^2} + \frac{48 R^4}{\sigma^4}.
\]
Thus $I_w \to I > 0$, and the delay argument of Theorem~\ref{thm:core-delay}, together with the minimax lower bound, gives~\eqref{eq:core-first-order}.

\subsection{Clipped-Poisson example}\label{sec:clipped-example}

Let $Y \mid X \sim \operatorname{Pois}(h(X^\top \theta))$, where
\[
    h(z) = \max\{c, \min(e^z, C)\}, \qquad 0 < c < C.
\]
Assume $\|X\| \le B_X$ and $\sup_{\theta \in \Theta}|X^\top \theta| \le B$ almost surely, with $B > 0$. For some $\delta, g > 0$, require
\begin{equation}\label{eq:clipped-active}
    \E_Q \! \left [\ind_{\{\log c + \delta \le X^\top \theta_0 \le \log C - \delta\}} X X^\top \right] \succeq g I_d.
\end{equation}
This active-region condition and matrix Chernoff~\cite{tropp2012} give
\begin{equation}
\begin{aligned}
 m&=\frac{gc}{2}\min\{1,\delta/(2B)\},\\
 \pi_w&=\min\{1,d\exp[-wg/(8B_X^2)]\}.
\end{aligned}
\label{eq:clipped-risk}
\end{equation}
The bounded means also give $C_K = C^2 B_X^2 / (2c)$, $\operatorname{tr} \Gamma \le C B_X^2$, and $\sup_w M_{2,w} < \infty$. Hence a deterministic fixed-window VI rule with $e_w \to 0$ attains~\eqref{eq:core-first-order} by Corollary~\ref{cor:core-first-order}.

\section{Numerical experiments}
\label{sec:experiments}
We study the noncanonical softplus-Poisson model $Y_{t,j} \mid X_t \sim \operatorname{Pois}(h(X_t^\top\theta))$, $j=1, \ldots, q$, where $h(z) = \log(1 + e^z)$, the covariate coordinates are independent $\operatorname{Unif}(-\sqrt3, \sqrt3)$, and $\theta_\infty = 0$. Responses are conditionally independent.

\subsection{Fixed-budget VI versus likelihood estimation}
For $q = 1$ and $d \in \{20, 100, 200, 400\}$, set $\theta_0 = \delta_d u_d$, with $u_{d,1} = 1/2$ and $u_{d,k} = \sqrt{0.75/(d-1)}$ for $k \ge 2$. We calibrate $\delta_d$ to $I = 0.08$ using 200,000 covariates. VI and the likelihood-based WLCUSUM comparator~\cite{xie2023} use $w = 320$, 64 projected steps, and
\[
 \Theta_d = \{\theta: \|\theta\|_2 \le 1.5,\;
 \theta_1 \ge \delta_d/8\}.
\]
This constraint provides both methods with a positive-direction and minimum-shift restriction and excludes the null. Initial VI/likelihood step sizes are $0.00125/0.0025$, with per-fit decay $0.995$.
Thresholds are calibrated to target the ARL $\E_\infty T = 1000$.
Timing uses 40 paired paths and 320 updates per path.

\begin{table}[t]
\centering
\setlength{\tabcolsep}{1pt}
\renewcommand{\arraystretch}{1.0}
\caption{Scalar-response comparison. Post-window delay: mean (SE);
$\Delta D$: VI minus MLE, with paired 95\% CI. Update time in ms.}
\label{tab:softplus-matched}
\begin{tabular*}{\columnwidth}{
  @{\extracolsep{\fill}}
  rrrr
  @{\extracolsep{0pt}\hspace{6pt}}
  r
  @{\hspace{2pt}}
  r
  @{}
}
\toprule
& \multicolumn{2}{c}{Mean delay (SE)}
&& \multicolumn{2}{c}{Update}\\
\cmidrule(lr){2-3}\cmidrule(lr){5-6}
$d$ & VI & MLE & $\Delta D$ [95\% CI] & VI & MLE\\
\midrule
20 & 49.97 (1.15) & 49.65 (1.33)
   & 0.33 [\textminus{}0.87,1.33] & 2.07 & 4.31\\
100 & 102.72 (3.77) & 101.60 (4.06)
    & 1.13 [\textminus{}3.07,5.31] & 2.92 & 5.17\\
200 & 133.80 (4.89) & 140.72 (5.07)
    & \textminus{}6.93 [\textminus{}12.86,\textminus{}1.13]
    & 4.12 & 6.31\\
400 & 178.37 (6.83) & 182.64 (7.35)
    & \textminus{}4.27 [\textminus{}13.55,4.21]
    & 6.30 & 8.45\\
\bottomrule
\end{tabular*}
\end{table}

Table~\ref{tab:softplus-matched} shows that VI achieves lower update times across all tested dimensions while maintaining comparable detection delays. At $d=200$ and $400$, VI also yields smaller mean delays, although the uncertainty intervals do not establish a systematic high-dimensional advantage. The clearest benefit is computational efficiency under a fixed iteration budget. While MLE is asymptotically efficient under standard regularity conditions, this does not imply better detection performance with a finite estimation window and limited optimization. These results suggest that VI can provide competitive detection performance at lower update cost, with potential delay gains in higher-dimensional, finite-sample settings.

\subsection{VI-WLCUSUM versus alternative detectors: multivariate responses}
We use $d \in \{1,2\}$ and $q \in \{1,2,4\}$, with $\theta_0 = a_{d,q,I} \mathbf{1}_d / \sqrt{d}$. Magnitudes are calibrated to joint KL information $I \in \{0.02,0.04,0.08\}$, so increasing $q$ does not increase information per time step.
Here $\Theta = \{\theta: \|\theta\|_2 \le 1.5\}$
imposes no directional constraint. 

Let $N_t = \sum_{j=1}^q Y_{t,j}$ and $\mu_\infty = \log2$. VI-WLCUSUM uses $w = 32$,
64 steps, initial step $0.1$, and decay $0.995$.
Score-based procedures monitor $V_t = X_t (N_t - q \mu_\infty) / \sqrt{q \mu_\infty}$. The multivariate exponentially weighted moving average (MEWMA) chart~\cite{lowry1992multivariate,zhang2023concept} uses weight $2/33$ (variance-equivalent memory 32). A moving-window Hotelling-type $T^2$ chart~\cite{hotelling1947multivariate} uses $\|\sum_{i=t-31}^t V_i\|^2 / 32$, and two-sided coordinatewise CUSUM charts~\cite{page1954} use reference $0.25$. Multichart CUSUM~\cite{tartakovsky2005asymptotic} and mixture Shiryaev--Roberts (MSR)~\cite{tartakovsky2018asymptotic} use the 12/112 nonzero $0.25$-lattice points in the radius-1.5 ball for $d=1/2$.
Oracle CUSUM~\cite{page1954} knows $\theta_0$ and is a benchmark.

The change follows 256 null-training samples. VI and the score smoothers initialize from the last 32 training samples, while cumulative charts start at zero. Each method uses 12,000 null calibration and 10,000 separate validation paths, targeting ARL $\E_\infty T = 1000$.

\begin{table}[!t]
\centering
\setlength{\tabcolsep}{1pt}
\renewcommand{\arraystretch}{0.945}
\caption{Mean initialized delay over 2,000 paths at target ARL 1000. Bold: smallest non-oracle mean.
VI: VI-WLCUSUM; EW: score-based MEWMA; $T^2$: moving-window Hotelling-type chart; SC: score-based CUSUM; MC: multichart CUSUM; MSR: mixture Shiryaev--Roberts; OC$^\dagger$: oracle CUSUM.}
\label{tab:memory32-comparison}
\begin{tabular*}{\columnwidth}{@{\extracolsep{\fill}}rrrrrrrr@{\hspace{10pt}}r@{}}
\toprule
$q$ & $I$ & VI & EW & $T^2$ & SC & MC & MSR & OC$^\dagger$\\
\cmidrule(lr){1-8}\cmidrule(l){9-9}
\multicolumn{9}{c}{$d=1$}\\
\midrule
1 & 0.02 & \textbf{155.3} & 194.9 & 188.9 & 190.0 & 173.3 & 162.4 & 114.3\\
  & 0.04 & \textbf{86.3}  & 105.5 & 102.4 & 102.9 & 99.5  & 97.7  & 71.5\\
  & 0.08 & \textbf{49.3}  & 57.8  & 55.3  & 56.4  & 55.6  & 58.6  & 43.9\\
2 & 0.02 & \textbf{151.9} & 187.8 & 193.9 & 191.4 & 181.9 & 164.9 & 111.7\\
  & 0.04 & \textbf{82.5}  & 99.7  & 97.6  & 99.0  & 100.7 & 96.3  & 69.6\\
  & 0.08 & \textbf{48.8}  & 54.0  & 53.3  & 55.7  & 57.9  & 57.9  & 43.5\\
4 & 0.02 & \textbf{155.0} & 194.0 & 191.3 & 188.3 & 201.2 & 173.9 & 110.6\\
  & 0.04 & \textbf{85.0}  & 99.6  & 100.1 & 98.7  & 108.0 & 99.6  & 69.6\\
  & 0.08 & \textbf{47.3}  & 52.6  & 52.4  & 53.9  & 60.0  & 58.3  & 42.9\\
\midrule
\multicolumn{9}{c}{$d=2$}\\
\midrule
1 & 0.02 & 205.1          & 256.3 & 247.3 & 241.1 & 199.1 & \textbf{192.3} & 113.2\\
  & 0.04 & \textbf{108.6} & 132.0 & 129.4 & 132.6 & 113.1 & 113.5 & 70.3\\
  & 0.08 & \textbf{59.7}  & 70.8  & 67.0  & 73.3  & 62.7  & 65.2  & 43.6\\
2 & 0.02 & \textbf{202.6} & 244.1 & 247.0 & 249.5 & 215.8 & 208.6 & 108.9\\
  & 0.04 & \textbf{107.9} & 141.4 & 133.2 & 137.1 & 120.0 & 120.1 & 72.1\\
  & 0.08 & \textbf{59.7}  & 69.1  & 68.0  & 73.1  & 66.4  & 67.0  & 43.4\\
4 & 0.02 & \textbf{196.3} & 244.5 & 248.4 & 242.1 & 241.2 & 231.9 & 109.3\\
  & 0.04 & \textbf{106.6} & 132.1 & 131.4 & 133.1 & 129.4 & 129.1 & 70.9\\
  & 0.08 & \textbf{59.5}  & 68.8  & 68.2  & 73.9  & 70.8  & 71.8  & 43.5\\
\bottomrule
\end{tabular*}
\end{table}

Table~\ref{tab:memory32-comparison} shows that VI-WLCUSUM generally improves delay over the alternatives in these settings. VI has the smallest mean in most settings, but mixture-SR is better for the weakest change at $(d,q) = (2,1)$.
This exception is consistent with estimation uncertainty offsetting the benefit of adaptation under weak signals. Oracle CUSUM has shorter delays in all settings, indicating a remaining penalty for the unknown parameter.
The delay advantage comes with more computation than the classical detectors, especially the score-based methods. Thus VI saves computation relative to the likelihood fit in the first study, but VI-WLCUSUM spends more computation for improved delay relative to simple score charts in the second.

\section{Conclusion}\label{sec:conclusion}
We proposed VI-WLCUSUM for quickest change detection in GLMs with unknown post-change parameters. Our analysis relates statistical and optimization errors in VI estimation to the loss of post-change information, leading to finite-window delay bounds and first-order asymptotic optimality. Our formulation provides an alternative to likelihood fitting that can reduce the cost of updating the detector. The numerical results demonstrate this computational benefit while maintaining competitive detection performance.

The delay guarantees require a fixed-window estimator and sufficient empirical curvature, which need not hold for warm-start implementations or high-dimensional problems with short windows. Extending the analysis to these settings is an important next step. Adaptive choices of the window length may further improve the balance between estimation cost and detection delay.

\section*{Acknowledgment}
This work is supported by the Office of Naval Research
[N000142412278].

\section*{Compliance with Ethical Standards}
This is a numerical simulation study for which no ethical approval was required.

\bibliographystyle{IEEEbib}
\bibliography{references}

\appendix

\section{Proofs}\label{app:proof}

\begin{proof}[Proof of Theorem~\ref{thm:core-arl}]
    We follow the martingale argument in \cite[Lemma 3]{xie2023}. For $t > w$, let $\Lambda_t = e^{\ell_t}$. Let $\lambda$ be a common dominating measure for the conditional distributions from Assumption~\ref{ass:core-model}. Under $\P_\infty$, independence of $Z_t$ from $\Fc_{t-1}$ gives, for every measurable set $B$,
    \[
        \P_\infty(Y_t \in B \mid \Fc_{t-1}, \sigma(X_t)) = \int_B p_{\theta_\infty} (y \mid X_t) \lambda(dy) \quad \text{a.s.}
    \]
    Since $Y_t \in \operatorname{supp} p_{\theta_\infty} (\cdot \mid X_t)$ almost surely, Assumption~\ref{ass:core-model} gives
    \begin{align*}
        \E_\infty [\Lambda_t \mid \Fc_{t-1}, \sigma(X_t)]
        &= \int_{\operatorname{supp} p_{\theta_\infty} (\cdot \mid X_t)} \frac{p_{\widehat \theta_{t-1}}(y \mid X_t)}{p_{\theta_\infty} (y \mid X_t)} p_{\theta_\infty} (y \mid X_t) \lambda(dy)\\
        &= \int_{\operatorname{supp} p_{\theta_\infty} (\cdot \mid X_t)} p_{\widehat \theta_{t-1}}(y \mid X_t)  \lambda(dy) = 1 \quad \text{a.s.}
    \end{align*}
    The tower property yields
    \begin{align}\label{eq:pf-1-lr-tower}
        \E_\infty [\Lambda_t \mid \Fc_{t-1}] = 1 \quad \text{a.s.}
    \end{align}

    Define the nonnegative adapted process
    \[
        R_w = 0, \quad R_t = (1+R_{t-1}) \Lambda_t, \quad t>w.
    \]
    By~\eqref{eq:pf-1-lr-tower}, 
    \[
        \E_\infty[R_t \mid \Fc_{t-1}] = (1 + R_{t-1}) \E_\infty[\Lambda_t \mid \Fc_{t-1}] = 1 + R_{t-1}.
    \]
    Induction from $R_w = 0$ gives $\E_\infty R_t = 1 + \E_\infty R_{t-1} = t - w < \infty$.
    Thus $M_t = R_t - (t-w)$ is adapted and integrable, with $M_w = 0$ and
    \[
        \E_\infty [M_t \mid \Fc_{t-1}] = 1 + R_{t-1} - (t-w) = R_{t-1} - \{(t-1) - w\} = M_{t-1}.
    \]
    Hence $(M_t)_{t \ge w}$ is an $(\Fc_t)$-martingale.

    Note that $R_{w+1} = \Lambda_{w+1} = e^{S_{w + 1}}$. If $R_{t-1} \ge  e^{S_{t-1}}$ for $t \ge w+2$, then
    \[
        R_t = (1+R_{t-1}) \Lambda_t \ge \max\{1, e^{S_{t-1}}\} e^{\ell_t} = e^{\max\{0, S_{t-1}\} + \ell_t} = e^{S_t}.
    \]
    Consequently, by induction, $R_t \ge e^{S_t}$ for every $t > w$. For an integer $N > w$, bounded optional sampling at $T_b \wedge N$ yields $\E_\infty M_{T_b \wedge N} = \E_\infty M_w = 0$. Therefore, since $R_{T_b \wedge N} \ge e^{S_{T_b \wedge N}} \ge e^b \ind_{T_b \le N}$,
    \[
        \E_\infty [T_b \wedge N] - w = \E_\infty R_{T_b \wedge N} \ge e^b \P_\infty(T_b \le N).
    \]
    If $\E_\infty T_b = \infty$, the result is immediate. Otherwise $T_b < \infty$ almost surely, and as $N \to \infty$, monotone convergence gives $\E_\infty T_b \ge w + e^b$.
\end{proof}

\begin{proof}[Proof of Theorem~\ref{thm:core-risk}]
    \emph{Likelihood-ratio integrability.} For probability measures $P_1, P_2$ with densities $p_1, p_2$ relative to a common dominating measure and $D_{\mathrm{KL}}(P_1 \| P_2) <  \infty$, the standard bound \citep{barron1988exponential} gives
    \begin{equation}\label{eq:lr-bound}
        \E_{P_1} \abs{\log(p_1/p_2)} \le D_{\mathrm KL}(P_1 \| P_2) + 2/e < \infty.
    \end{equation}
    For $\theta \in \Theta$ and $Z = (X, Y)$, define
    \[
        \ell(\theta; Z) = \log \frac{p_\theta (Y \mid X)}{p_{\theta_\infty} (Y \mid X)}.
    \]
    Thus, $\ell_t = \ell(\widehat \theta_{t-1}; Z_t)$. Applying~\eqref{eq:lr-bound} with $(P_1, P_2) = (P_{\theta_0}, P_{\theta_\infty})$ and $(P_1, P_2) = (P_{\theta_0}, P_\theta)$, then using Assumption~\ref{ass:core-vi}(i), (iv) yields
    \begin{equation}\label{eq:uniform-lone}
    \begin{aligned}
        \sup_{\theta \in \Theta} \E_0 \abs{\ell(\theta; Z)}
        &\le \E_0 \abs{\log \frac{p_{\theta_0}(Y \mid X)}{p_{\theta_\infty}(Y \mid X)}} + \sup_{\theta \in \Theta} \E_0 \abs{\log \frac{p_{\theta_0}(Y \mid X)}{p_\theta(Y \mid X)}}\\
        &\le I + \sup_{\theta \in \Theta} K(\theta_0, \theta) + 4/e\\
        &\le I + C_K D^2 + 4/e < \infty.
    \end{aligned}
    \end{equation}

    \emph{VI risk and information loss.}
    We adapt the VI estimation error arguments of \cite[Prop.~3]{juditsky2019} and \cite[Prop.~4.2]{juditsky2023}. For the i.i.d.~post-change training window, put
    \[
        \psi_i = X_i \{h(X_i^\top \theta_0) - Y_i\}, \quad u_w = \widehat \theta_w - \theta_0.
    \]
    Correct mean specification and Assumption~\ref{ass:core-vi}(i) gives
    \[
        \E_0[\psi_i \mid X_i] = 0, \quad
        \E_0[\psi_i \psi_i^\top] = \Gamma.
    \]
    Independence therefore implies
    \[
        \E_0 \norm{F_w(\theta_0)}^2 = \frac{1}{w^2} \sum_{i=1}^w \E_0 \norm{\psi_i}^2 = \frac{\operatorname{tr}\Gamma}{w}.
    \]
    Since $\theta_0 \in \Theta$, the (approximate) VI~\eqref{eq:VI} at $\theta_0$ gives $\langle F_w(\widehat \theta_w), u_w \rangle \le \epsilon_w \norm{u_w}$. On $G_w$, Assumption~\ref{ass:core-vi}(ii) and Cauchy-Schwarz inequality then yield
    \begin{equation*}\label{eq:pathwise-risk}
    \begin{aligned}
        m \norm{u_w}^2
        \le \langle F_w(\widehat \theta_w) - F_w(\theta_0), u_w \rangle
        \le \epsilon_w \norm{u_w} - \langle F_w(\theta_0), u_w \rangle
        \le \{\epsilon_w + \norm{F_w(\theta_0)}\} \norm{u_w}.
    \end{aligned}
    \end{equation*}
    Thus, $\norm{u_w} \le m^{-1} \{\norm{F_w(\theta_0)} + \epsilon_w\}$ on $G_w$. On $G_w^c$, feasibility gives $\norm{u_w} \le D$. Therefore,
    \begin{equation}\label{eq:risk-final}
    \begin{aligned}
        \E_0 \norm{u_w}^2
        &\le \frac{1}{m^2}
        \E_0[(\norm{F_w(\theta_0)} + \epsilon_w)^2 \ind_{G_w}] + D^2 \pi_w\\
        &\le \frac{\{(\E_0 \norm{F_w(\theta_0)}^2)^{1/2} + e_w\}^2}{m^2} + D^2 \pi_w\\
        &= \frac{(\sqrt{\operatorname{tr} \Gamma/w} + e_w)^2}{m^2} + D^2 \pi_w = \Ec_w.
    \end{aligned}
    \end{equation}
    For a fixed feasible $\theta$,
    \[
        \ell(\theta; Z) = \log \frac{p_{\theta_0}(Y \mid X)}{p_{\theta_\infty}(Y \mid X)} - \log \frac{p_{\theta_0}(Y \mid X)}{p_\theta(Y \mid X)},
    \]
    gives
    \begin{equation}\label{eq:fixed-KL}
        \E_0 \ell(\theta; Z) = I - K(\theta_0, \theta).
    \end{equation}
    Let $Z=Z_{w+1}$, independent of $(Z_1,\ldots,Z_w)$ under $\P_0$. Therefore, independence and integrability in \eqref{eq:uniform-lone}, together with \eqref{eq:risk-final}, imply
    \begin{equation}\label{eq:information-final}
    \begin{aligned}
        I_w
        = \E_0 \left[\E_0 [\ell(\widehat \theta_w; Z) \mid \widehat \theta_w] \right]
        = I - \E_0 K(\theta_0, \widehat \theta_w)
        \ge I - C_K \E_0 \norm{u_w}^2 \ge I - C_K \Ec_w = \underline I_w.
    \end{aligned}
    \end{equation}
    In particular, on a post-change stream,
    \begin{equation}\label{eq:conditional-kl}
        \E_0 [\ell_t \mid \Fc_{t-1}]
        = I - K(\theta_0, \widehat \theta_{t-1})
        \le I,
    \end{equation}
    where the same bound~\eqref{eq:uniform-lone} guarantees that this conditional expectation is well-defined.
\end{proof}

\begin{proof}[Proof of Theorem~\ref{thm:core-delay}]
    Suppose all observations follow the post-change law. By the common window, we can write
    \[
        \ell_t = \phi_w(Z_{t-w}, \ldots, Z_t), \quad t>w,
    \]
    for one measurable function $\phi_w$. Now i.i.d.~pairs imply that $\ell_t$ is independent of $\Fc_{t-w-1}$, and
    \begin{equation}\label{eq:stationary-moments}
        \E_0 \ell_t = I_w, \quad \E_0 \ell_t^2 \le M_{2,w}.
    \end{equation}
    
    Now define the raw sum and its crossing time by
    \[
        U_w = 0, \quad U_t = \sum_{s=w+1}^t \ell_s, \quad T' = \inf \{t > w: U_t \ge b\}.
    \]
    For a deterministic integer $N > w$, set
    \[ 
        \sigma_N = T' \wedge N, \quad a_N = \E_0 \sigma_N.
    \]
    For $t = w+1, \ldots, N+w$, the event $\{\sigma_N \ge t-w\}$ belongs to $\Fc_{t-w-1}$. Therefore, conditioning yields
    \begin{equation}\label{eq:delayed-wald}
    \begin{aligned}
        \E_0 \sum_{t=w+1}^{\sigma_N+w} \ell_t
        &= \sum_{t=w+1}^{N+w} \E_0[\ell_t \ind_{\{\sigma_N \ge t-w\}}]\\
        &= \sum_{t=w+1}^{N+w} \E_0[\ind_{\{\sigma_N \ge t-w\}} \E_0(\ell_t \mid \Fc_{t-w-1})]\\
        &= I_w \sum_{s=1}^N \P_0(\sigma_N \ge s)
        = I_w a_N,
    \end{aligned}
    \end{equation}
    and
    \begin{equation}\label{eq:delayed-squares}
    \begin{aligned}
        \E_0 \sum_{t=w+1}^{\sigma_N+w} \ell_t^2
        &= \sum_{t=w+1}^{N+w} \E_0[\ell_t^2 \ind_{\{\sigma_N \ge t-w\}}]\\
        &\le M_{2,w} \sum_{s=1}^N \P_0(\sigma_N \ge s)
        = M_{2,w} a_N.
    \end{aligned}
    \end{equation}
    Moreover, $\{\sigma_N < t \le \sigma_N + w\} \in \Fc_{t-1}$. Using~\eqref{eq:conditional-kl} and $\sum_{t=w+1}^{N+w} \ind_{\{\sigma_N < t \le \sigma_N + w\}} = w$ pathwise,
    \begin{equation}\label{eq:tail-window}
    \begin{aligned}
        \E_0 \sum_{t=\sigma_N+1}^{\sigma_N+w} \ell_t
        &= \sum_{t=w+1}^{N+w} \E_0[\ind_{\{\sigma_N < t \le \sigma_N + w\}} \E_0(\ell_t \mid \Fc_{t-1})]\\
        &\le I \sum_{t=w+1}^{N+w} \P_0(\sigma_N < t \le \sigma_N + w)
        = wI.
    \end{aligned}
    \end{equation}
    Subtracting this tail from~\eqref{eq:delayed-wald} yields
    \begin{equation}\label{eq:stopped-drift}
        I_w a_N \le \E_0 U_{\sigma_N} + wI.
    \end{equation}
    On $\{T' \le N\}$, the first crossing gives $U_{T'-1} < b$, so $U_{\sigma_N} \le b + \ell_{\sigma_N}^+$. On $\{T' > N\}$, $U_{\sigma_N} = U_N < b$. Consequently, with $O_N = \ell_{\sigma_N}^+ \ind_{\{T' \le N\}}$,
    \begin{gather*}\label{eq:overshoot-path}
        U_{\sigma_N} \le b + O_N,\\
        O_N^2 \le \sum_{t=w+1}^{\sigma_N} \ell_t^2 \le \sum_{t=w+1}^{\sigma_N+w} \ell_t^2.
    \end{gather*}
    Cauchy-Schwarz inequality and~\eqref{eq:delayed-squares} imply
    \[
        \E_0 O_N \le \sqrt{\E_0 O_N^2} \le \sqrt{M_{2,w} a_N}.
    \]
    Together with~\eqref{eq:stopped-drift} and $I_w \ge \underline I_w > 0$, this gives
    \begin{equation}\label{eq:quadratic-start}
        \underline I_w a_N \le b + wI + \sqrt{M_{2,w} a_N}.
    \end{equation}
    Solving this quadratic inequality on $\sqrt{a_N}$ gives
    \begin{align}\label{eq:quadratic-solved}
        a_N
        \le \frac{b + wI}{\underline I_w}
        +\frac{M_{2,w}}{2 \underline I_w^2}
        +\sqrt{\frac{M_{2,w}(b + wI)}{\underline I_w^3} +\frac{M_{2,w}^2}{4\underline I_w^4}}
        \le \frac{b + wI}{\underline I_w}
        +\frac{M_{2,w}}{\underline I_w^2}
        +\sqrt{\frac{M_{2,w}(b + wI)}{\underline I_w^3}}.
    \end{align}
    The right-hand side is finite and independent of $N$. Since $\sigma_N \uparrow T'$, monotone convergence proves
    \begin{equation}\label{eq:raw-delay}
    \begin{aligned}
     \E_0 T'
     \le \frac{b + wI}{\underline I_w} + \frac{M_{2,w}}{\underline I_w^2} + \sqrt{\frac{M_{2,w}(b + wI)}{\underline I_w^3}} < \infty.
    \end{aligned}
    \end{equation}
    Thus $T'$ is integrable and almost surely finite.
    
    On the same observation sequence,
    \[
        S_{w+1}=U_{w+1}, \quad
        S_t-U_t = \max(S_{t-1},0) - U_{t-1} \ge 0
    \]
    whenever $S_{t-1} \ge U_{t-1}$.
    Induction gives $S_t \ge U_t$ for $t > w$, hence
    \begin{equation}\label{eq:cusum-raw}
        T_b \le T',\qquad \E_0 T_b \le \E_0 T'.
    \end{equation}
    
    The restart argument of \cite[Lemma~4]{xie2023} applies to the i.i.d.~post-change tail and the common fixed-window map in Assumption~\ref{ass:core-window}. That gives
    \begin{equation}\label{eq:wadd-equality}
        \Dc(T_b) \le \E_0 T_b.
    \end{equation}
    Together with~\eqref{eq:raw-delay} and~\eqref{eq:cusum-raw}, this proves~\eqref{eq:core-delay}.
\end{proof}

\begin{proof}[Proof of Corollary~\ref{cor:core-first-order}]

    The assumptions imply $\Ec_w \to 0$. For all sufficiently large $\gamma$,
    \[
        0 \le C_K \Ec_w \le I/2, \quad 0 < I/2 \le \underline I_w \le I.
    \]
    In particular, $\underline I_w > 0$ eventually. Define
    \[
        \alpha_w = \frac{C_K \Ec_w}{I}, \quad
        \overline M_2 = \sup_w M_{2,w} < \infty.
    \]
    Then $\alpha_w \to 0$ and $\underline I_w = I (1 - \alpha_w)$. Dividing~\eqref{eq:core-delay} by $b/I$ gives
    \begin{equation}\label{eq:normalized-upper}
        \frac{\Dc(T_b)}{b/I}
        \le \frac{1 + wI/b}{1 - \alpha_w}
        + \frac{\overline M_2}{bI(1 - \alpha_w)^2}
        + \sqrt{\frac{\overline M_2}{bI} \frac{1 + wI/b}{(1 - \alpha_w)^3}}.
    \end{equation}
    Here, the first term tends to one because $\alpha_w \to 0$ and $w = o(b)$. The second and third terms tend to zero because $I > 0$ is fixed and $b \to \infty$. Thus
    \begin{equation}\label{eq:upper-limit}
        \limsup_{\gamma \to \infty} \frac{\Dc(T_{\log \gamma})}{(\log \gamma)/I} \le 1.
    \end{equation}
        Also, Theorem~\ref{thm:core-arl} gives
        $\E_\infty T_{\log\gamma} \ge w + \gamma \ge \gamma$.

    Now we invoke the classical minimax lower bound. For the fixed $\theta_0$, Assumption~\ref{ass:core-model} gives independent observations that are identically distributed within each regime, with one-observation laws $P_{\theta_\infty}$ and $P_{\theta_0}$. The common covariate law $Q$ implies
    \[
        D_{\mathrm KL}(P_{\theta_0} \| P_{\theta_\infty}) = \E_Q D_{\mathrm KL}\{p_{\theta_0}(\cdot \mid X) \| p_{\theta_\infty} (\cdot \mid X)\}
        = I \in (0, \infty).
    \]
    Let $\Cc_\gamma = \{T: \E_\infty T \ge \gamma\}$ over adapted stopping times. For these i.i.d.~observation laws, minimax information bound of \cite{lorden1971} gives
    \begin{equation}\label{eq:minimax-lower}
        \liminf_{\gamma \to \infty} \inf_{T \in \Cc_\gamma} \frac{\Dc(T)}{(\log \gamma) / I} \ge 1.
    \end{equation}
    This bound applies to all such stopping times, including the VI-based rule of VI-WLCUSUM.

    Theorem~\ref{thm:core-arl} places $T_{\log \gamma}$ in $\Cc_\gamma$. Combining~\eqref{eq:upper-limit} and~\eqref{eq:minimax-lower} proves the claim.
\end{proof}

\end{document}